%%
%% This is file `sample-sigconf.tex',
%% generated with the docstrip utility.
%%
%% The original source files were:
%%
%% samples.dtx  (with options: `sigconf')
%% 
%% IMPORTANT NOTICE:
%% 
%% For the copyright see the source file.
%% 
%% Any modified versions of this file must be renamed
%% with new filenames distinct from sample-sigconf.tex.
%% 
%% For distribution of the original source see the terms
%% for copying and modification in the file samples.dtx.
%% 
%% This generated file may be distributed as long as the
%% original source files, as listed above, are part of the
%% same distribution. (The sources need not necessarily be
%% in the same archive or directory.)
%%
%% The first command in your LaTeX source must be the \documentclass command.
% \documentclass[sigconf]{acmart} %, review
%\usepackage{amsmath,amsthm} %amssymb

\documentclass[sigconf,nonacm]{acmart}
\usepackage[utf8]{inputenc}

%%
%% \BibTeX command to typeset BibTeX logo in the docs
\AtBeginDocument{%
  \providecommand\BibTeX{{%
    \normalfont B\kern-0.5em{\scshape i\kern-0.25em b}\kern-0.8em\TeX}}}
\usepackage{array}
\newcolumntype{P}[1]{>{\centering\arraybackslash}p{#1}}

\newtheorem{theorem}{Theorem}[section]
\newtheorem{lemma}[theorem]{Lemma}

%% Rights management information.  This information is sent to you
%% when you complete the rights form.  These commands have SAMPLE
%% values in them; it is your responsibility as an author to replace
%% the commands and values with those provided to you when you
%% complete the rights form.

% \setcopyright{acmcopyright} 
% \acmConference[KDD '24]{}{August 25--29, 2024}{Barcelona, Spain}
% \copyrightyear{2024}
% \acmYear{2024}
% \acmDOI{10.1145/1122445.1122456}

\setcopyright{acmlicensed}
\copyrightyear{2024}
\acmYear{2024}
\acmDOI{XXXXXXX.XXXXXXX}
%% These commands are for a PROCEEDINGS abstract or paper.
\acmConference[KDD '24]{30th ACM SIGKDD Conference on Knowledge Discovery and Data Mining}{August 25--29, 2024}{Barcelona, Spain}
\acmBooktitle{KDD '24: 30th ACM SIGKDD Conference on Knowledge Discovery and Data Mining,
August 25--29, 2024, Barcelona, Spain}
\acmISBN{978-1-4503-XXXX-X/18/06}

% \copyrightyear{2021}
% \acmYear{2021}
% \setcopyright{acmcopyright}\acmConference[WSDM '21]{Proceedings of the Fourteenth ACM International Conference on Web Search and Data Mining}{March 8--12, 2021}{Virtual Event, Israel}
% \acmBooktitle{Proceedings of the Fourteenth ACM International Conference on Web Search and Data Mining (WSDM '21), March 8--12, 2021, Virtual Event, Israel}
% \acmPrice{15.00}
% \acmDOI{10.1145/3437963.3441737}
% \acmISBN{978-1-4503-8297-7/21/03}

%% These commands are for a PROCEEDINGS abstract or paper.
% \acmConference[San Diego '20]{San Diego '20: 26th SIGKDD Conference on Knowledge Discovery and Data Mining}{August 22--27, 2020}{San Diego, CA}
% \acmBooktitle{Proceedings of The 26th ACM SIGKDD Conference on Knowledge Discovery and Data Mining (KDD '20), August 22--27, 2020, San Diego, CA}
% \acmPrice{15.00}
% \acmISBN{978-1-4503-XXXX-X/18/06}

%%
%% Submission ID.
%% Use this when submitting an article to a sponsored event. You'll
%% receive a unique submission ID from the organizers
%% of the event, and this ID should be used as the parameter to this command.
%%\acmSubmissionID{123-A56-BU3}

%%
%% The majority of ACM publications use numbered citations and
%% references.  The command \citestyle{authoryear} switches to the
%% "author year" style.
%%
%% If you are preparing content for an event
%% sponsored by ACM SIGGRAPH, you must use the "author year" style of
%% citations and references.
%% Uncommenting
%% the next command will enable that style.
%%\citestyle{acmauthoryear}

%%
%% end of the preamble, start of the body of the document source.

\begin{document}

%%
%% The "title" command has an optional parameter,
%% allowing the author to define a "short title" to be used in page headers.
\title{Improving Ego-Cluster for Network Effect Measurement}

%%
%% The "author" command and its associated commands are used to define
%% the authors and their affiliations.
%% Of note is the shared affiliation of the first two authors, and the
%% "authornote" and "authornotemark" commands
%% used to denote shared contribution to the research.

\author{Wentao Su}
\affiliation{%
  \institution{LinkedIn Corporation}
  \streetaddress{1000 W Maude Ave}
  \city{Sunnyvale}
  \state{CA}}
\email{wesu@linkedin.com}

\author{Weitao Duan}
\affiliation{%
  \institution{LinkedIn Corporation}
  \streetaddress{1000 W Maude Ave}
  \city{Sunnyvale}
  \state{CA}}
\email{wduan@linkedin.com}

%%
%% By default, the full list of authors will be used in the page
%% headers. Often, this list is too long, and will overlap
%% other information printed in the page headers. This command allows
%% the author to define a more concise list
%% of authors' names for this purpose.

%%
%% The abstract is a short summary of the work to be presented in the
%% article.
\begin{abstract}
The network effect, wherein one user’s activity impacts another user, is common in social network platforms. Many new features in social networks are specifically designed to create a network effect, enhancing user engagement. For instance, content creators tend to produce more when their articles and posts receive positive feedback from followers. This paper discusses a new cluster-level experimentation methodology for measuring creator-side metrics in the context of A/B experiments. The methodology is designed to address cases where the experiment randomization unit and the metric measurement unit differ. It is a crucial part of LinkedIn’s overall strategy to foster a robust creator community and ecosystem. The method is developed based on widely-cited research at LinkedIn but significantly improves the efficiency and flexibility of the clustering algorithm. This improvement results in a stronger capability for measuring creator-side metrics and an increased velocity for creator-related experiments.

\end{abstract}

%%
%% The code below is generated by the tool at http://dl.acm.org/ccs.cfm.
%% Please copy and paste the code instead of the example below.
%%
\begin{CCSXML}
<ccs2012>
<concept>
<concept_id>10002944.10011123.10011131</concept_id>
<concept_desc>General and reference~Experimentation</concept_desc>
<concept_significance>500</concept_significance>
</concept>
<concept>
<concept_id>10002944.10011123.10010912</concept_id>
<concept_desc>General and reference~Empirical studies</concept_desc>
<concept_significance>300</concept_significance>
</concept>
</ccs2012>
\end{CCSXML}

\ccsdesc[500]{General and reference~Experimentation}
\ccsdesc[300]{General and reference~Empirical studies}

%%
%% Keywords. The author(s) should pick words that accurately describe
%% the work being presented. Separate the keywords with commas.
\keywords{A/B Testing, Clustering, Creator Measurement Metrics}

%% A "teaser" image appears between the author and affiliation
%% information and the body of the document, and typically spans the
%% page.
% \begin{teaserfigure}
%   \includegraphics[width=\textwidth]{sampleteaser}
%   \caption{Seattle Mariners at Spring Training, 2010.}
%   \Description{Enjoying the baseball game from the third-base
%   seats. Ichiro Suzuki preparing to bat.}
%   \label{fig:teaser}
% \end{teaserfigure}

%%
%% This command processes the author and affiliation and title
%% information and builds the first part of the formatted document.
\maketitle

\section{Introduction}

Randomized controlled experiments, or A/B tests, are the gold standard for evaluating the effect of new product features. In the technology industry, experimentation is adopted by many companies to measure the impact of new features \cite{gupta_top_2019, xu_infrastructure_2015, bakshy_designing_2014, kohavi_controlled_2009, hohnhold_focusing_2015, ryaboy_twitter_2015}. 
They rely on the “Stable Unit Treatment Values Assumption” (SUTVA) which states that treatment only affects treated users and does not spill over to their friends. 
%ysu1: missing "unit" in "stable treatment value assumption" done

Violations of SUTVA are not uncommon in features that exhibit network effects. In social networks, such as Facebook \cite{saveski_detecting_2017} or LinkedIn \cite{Guillaume2019}, connected users can interact with one another. If user A and user B are connected, by changing the ranking of items on user A’s homepage feed, we impact their engagement with their feed and indirectly change the items that appear on user B’s feed. In marketplaces, such as Ads marketplace \cite{liu_trustworthy_2021}, delivery networks \cite{kastelman_david_switchback_nodate}, and ride hailing services \cite{chamandy_experimentation_2016}, treatment units (marketers, drivers, customers, etc) interfere with control units through market competition. 

The spillover of treatment effects between units violates SUTVA and leads to inaccurate treatment effect estimates. For instance, in social networks, it is not uncommon to observe low-quality "viral" content having positive engagement effects on users but ultimately having a negative overall impact. On the seller side of a marketplace, if a treatment enhances the competitiveness of certain sellers, unless all buyers still have an untapped budget to purchase more, the increased appeal of treatment sellers results in relatively less appeal for control sellers, causing control sales to decrease as treatment sales increase.

As LinkedIn endeavors to cultivate a robust creator community and ecosystem, it's crucial that we comprehend the impact of the features we develop for content creators and monitor our progress in aligning with our vision for creators. For instance, when the AI team constructs the feed recommendation model suggesting creator content to users, we must assess not only the impact of the AI model on viewer-side metrics, such as the number of viral actions (likes, comments, and reshares), but also on creator-side metrics like the creator's retention rate, that is, how likely creators are to continue generating in the next few days due to feedback from their viewers, ensuring the evolution of a more dynamic feed network.
%ysu1: but also on creator-side metrics like the creator's retention rate, that is, how likely creators are to continue generating 
% done

Evaluating the impact on viewer-side metrics can be easily achieved through traditional viewer-side A/B tests, wherein some viewers are randomly assigned to the new AI model and others to the existing model. However, measuring creator-side metrics poses challenges due to the network effect \cite{NetworkTesting,Guillaume2019} introduced by different units of randomization and measurement. After a creator posts in the feed, the feed ranking model positions the post in their viewers' feeds. To test the new feed ranking model, the creator's viewers are randomly assigned to either the new model or the old model. Nevertheless, attributing any change in the creator's subsequent actions to the new model becomes challenging, as about half of the creator's viewers are in treatment, and the other half are in control. The spillover effect in the creator's audience makes detecting any lift in creator-side metrics under this experimental setup extremely difficult. Therefore, a customized experimentation method is often required to measure creator-side metrics in this social network.

\section{Review on Controlled Experiments}
In this section, we give a brief review on the evolution of controlled experiment with a focus on network interference. The foundation of experimentation was introduced by Sir Ronald A. Fisher at the Rothamsted Agricultural Station in England in the 1920s with a focus on agriculture \cite{box_r.._1980}. While the theory is simple, many researchers have studied and extended Fisher's work in textbooks and papers \cite{tamhane_statistical_2009} and controlled experiment has gained its popularity beyond the original agricultural field \cite{kohavi_seven_2014, kohavi_online_2013}. Experiment practitioners in many fields leverage the theory and conduct experiments to evaluate new ideas \cite{tang_overlapping_2010,deng_improving_2013}. Deployment and analysis of controlled experiments are done at large scale, presenting unique challenges and pitfalls. Many researchers and experiment practitioners have described the challenges, pitfalls and novel solutions \cite{xu_infrastructure_2015, chen_automatic_2018, fabijan_diagnosing_2019, dimmery_shrinkage_2019}. One shared challenge, among many, is to measure the treatment effect when interference exists among experimental units.
%ysu1: citation should better follow the author: Fisher \cite{box_r.._1980} .. also, no need "in publications"
%ysu1: One of the shared challenge -> One shared challenge
% done

Before we dive into our solution to the interference problem, we want to review the set up and notation for controlled experiments and lay the foundation for the rest of paper.

Suppose we have one treatment feature $T$ and one control experience $C$,  the metric of interest for user $i$ is $Y_i$ and the assignment for user $i$ is $W_i$, where

\begin{equation}
W_i  = 
\begin{cases}
1  & \text{if user $i$ is in treatment group} \\
0 &\text{if user $i$ is in control group}
\end{cases}
\end{equation}
 
Following Rubin Causal Model or the potential outcome framework set up \cite{rubin_estimating_1974, holland_statistics_1986, imbens_causal_2015}, each unit’s potential outcome is defined as a function of the entire assignment vector $\mathbf{W} \in \{0,1\}^N$ with $N$ of units to treatment buckets: $Y_i(\mathbf{W})$. 

The Average Treatment Effect (ATE) is defined as:

\begin{equation}
\mu_Y = \frac{1}{N} \sum_{i=1}^N Y_i(\mathbf{W_T}) - \frac{1}{N}\sum_{i=i}^N Y_i(\mathbf{W_c})
\end{equation}

where $\mathbf{W_T} ={1, ..., 1}$ and  $\mathbf{W_C} ={0, ..., 0}$.

If Stable Unit Treatment Value Assumption (SUTVA) holds, the realized outcome and the potential outcome have the following relationship:
\begin{equation}
Y_i = 
\begin{cases}
Y_i(0)  & \text{if $W_i=0$} \\
Y_i(1) &\text{if $W_i=1$}
\end{cases}
\end{equation}

In a typical controlled experiment, suppose there are $N_C$ units in the control group and $N_T$ units in the treatment group, the ATE is given by the average difference between the treatment and control group:
\begin{equation}
\mu_Y' = \frac{1}{N_T} \sum_{i \in T} Y_i(1) - \frac{1}{N_C}\sum_{i \in C} Y_i(0)
\end{equation}

With interference, SUTVA no long holds and $\mu_Y' $ is usually a biased estimator of $\mu_Y$. So alternative experiment design and analysis approaches needs to be adopted to properly measure the ATE.

\section{Existing Solutions}
To estimate the ATE in a social network, numerous researchers have put forth various design and analysis methodologies. The proposed solutions follow two major directions.

The first direction places a greater emphasis on experimental designs. It begins by creating clusters between users in the network with the hope that most interference occurs within the cluster rather than between them. Cluster-level randomization is then utilized to estimate the ATE. Meta has adopted clustering randomization in their network experiments \cite{facebook_clustering}. At LinkedIn, we initially attempted to directly cluster users and run A/B tests at the cluster level \cite{saveski_detecting_2017}. However, due to the low number of clusters and high interference between them, this approach did not yield much success. Subsequently, we developed a more efficient cluster algorithm, Ego Cluster v1 \cite{Guillaume2019}, which generates numerous small clusters and measures the 1-hop (from viewers to creators) network effect. Although a significant breakthrough for its model-agnostic approach, we found that this tool encounters several issues that impede the pace of creator-related experiments. The algorithm is a generic method demonstrating that clustering is an efficient means of measuring network effects but lacks certain flexibility. For instance, to avoid clustering directly in a densely connected graph, the tool clusters in an active member-only graph. Moreover, it cannot target specific types of content creators. The network type forming the clusters is hard-coded using the default network based on the past 90 days' feed impression count and is not easily adaptable to other network types. Importantly, the tool requires manual tuning of numerous network parameters, a time-consuming process lasting several days. Finally, since LinkedIn typically has multiple feed ranking models to be experimented concurrently, the Ego Cluster experiment tends to absorb a significant proportion of engaged members, leading to poor-quality leftover traffic and biased estimates for other viewer measurement experiments \cite{Guillaume2019}, although this bias has not been quantified in prior research.
%ysu1: citation 22 and 23 are the same
%ysu1: the reference to the previous algo is vague: "we developed a more efficient cluster algorithm, Ego Clusters". Ego clusters can refers to many things in this paper. Maybe use something like \texttt{EgoCluster 1.0} or simply \texttt{EgoCluster}?
%ysu1: typo: has multiple feed ranking model -> models Done

The second direction leverages a model or surrogate metric to estimate network interference based on the treatment assignment of other users in the network. \cite{xu_infrastructure_2015} discussed a simple model where the interference portion of ATE follows a linear relationship with the number of treated friends. To reduce the cost of estimating network effects, we have also tested a few downstream attribution surrogate metrics in a traditional A/B test setting. These metrics attempt to explicitly correlate a viewer-side action with a creator-side impact by leveraging our internal tracking to link the two together. One example is the creator love metric, which tracks feedback from a specific viewer to a creator, allowing us to infer how creators would respond to the feedback. While these attribution metrics are easy to use, they are much less accurate compared to the cluster-level method because they rely on strong modeling assumptions. It is likely that the number of feedback from a viewer to a creator is positively but not linearly related to the likelihood of that creator creating more posts. However, quantifying this dynamic relationship empirically proves challenging.

To accurately measure the network effect while simultaneously enhancing the velocity of experiments, we have decided to intensify our focus on the Ego Cluster idea but completely revamp the clustering algorithm to make the tool scalable, efficient, and accurate for creator measurement experiments.

\section{A New Ego Cluster Tool}

\begin{figure*}[h]
  \centering
  \includegraphics[width=0.85\linewidth]{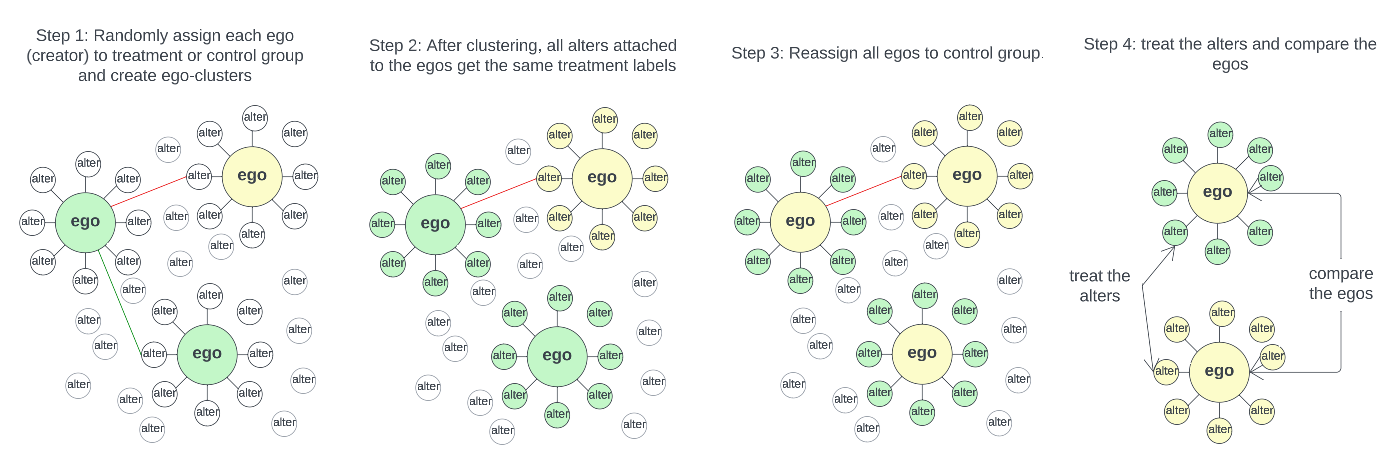}
  \caption{cluster-level experimentation process}
  \Description{this is the experimentation process chart, different colors correspond to different variants}  
  \label{fig:process chart}
\end{figure*}
%ysu1: add a desciption on the labels. Different colors correspond to different variants (treatment / control)

\subsection{Ego Cluster Experimentation Process}
Before delving into the details of our clustering algorithm, it is worthwhile to review the general process of cluster-level experimentation for content creator measurement at LinkedIn.

Figure \ref{fig:process chart} illustrates four major steps. The first step involves the random and iterative assignment of egos (creators in the feed example) to the treatment or control group. Each alter (viewers in the feed example) can be attached to one ego. If an alter can be attached to multiple egos, it will be attached to the first ego. Once clusters are created, alters automatically receive the same treatment labels as the egos to which they are attached. Following this, we reassign all egos to the control group and treat the alters with treatment and control features in the experiment. The final experiment readout involves comparing the metrics from egos between the two variant groups. Since all egos receive the control feature of the test, any statistical significance in the metrics can be attributed to the different levels of feedback from their audiences between the treatment and control groups. The success of the process relies on how well clusters are separated: the less spillover across the clusters\footnote{the red line in step 2 and 3 is the case showing one alter is assigned to a control ego but is also connected to a treatment ego, causing the spillover effect.}, the more robust the measurement will be.

\subsection{New Tool Architecture}

The new Ego Cluster tool consolidates the network preparation, clustering process, and diagnostic analysis modules, streamlining the entire running process. Figure \ref{fig:MP design} shows the overall new tool architecture design.

The network preparation relies on LinkedIn network relationship database, defining six main types of network graphs\footnote{For every interaction between a viewer and a creator, impressions, clicks, likes, reshares, comments, and viral actions are recorded.}. The creator dataset is also useful if users wish to create clusters based on specific types of creators. The tool accepts a custom-defined member list as the center of the cluster, as long as it generates enough samples, ideally over 100k. Alongside the key quality metrics discussed below for selecting the best network type for cluster formation, manual judgment is involved based on each business use. For instance, to test the creator-side impact of a new feed relevance model, the viral action or similar network type is recommended because viral actions imposed by viewers often have significant consequences on the retention rate of creators. For a notification relevance model measurement, the impression type is preferred as the notification network is generally broader, requiring a more extensive definition of the network type. In the original version of Ego Cluster , the connection network type was also proposed, but it faced challenges as the connection cluster was too loosely-formed, leading to larger overlapping interference, coupled with high implementation costs, as it is not conveniently defined in our network relationship database.
%ysu1: are we supposed to say what the dataset is named at linkedin? We can simply say affinity graph data that records edges xxx.
%ysu1: "the impression type is preferred" -> what is impression type? just impression?
%ysu1: In the original Ego Cluster version -> In the original version of Ego Cluster done

The clustering process is the core part of the tool and is discussed in section \ref{sec:algorithm}. Emphasizing the importance of feedback relationships, we have designed an algorithm named "one-degree label propagation" to capture network bounding based on the original idea of label propagation \cite{Raghavan_2007}.
%ysu1: extra space after \ref{sec:algorithm} done

The last module is the diagnostic analysis, where we measure the quality of the clustering to ensure its robustness and qualification for use in A/B tests. We define two types of metrics: network stability rate and Ego Cluster loss rate, along with the remaining traffic comparison analysis, enabling users to be aware of potential measurement bias for other feed experiments conducted simultaneously.

Upon completion of the flow, the tool generates the member assignment list for use in Custom Selector on T-Rex, the LinkedIn A/B test platform. Subsequently, testing output is generated, containing creator metric measurements.

%Figure \ref{fig:MP design} shows the overall new tool architecture design.
%ysu1: I would move this sentence to the end of the first paragraph. Done

\begin{figure}[h]
  \centering
  \includegraphics[width=\linewidth]{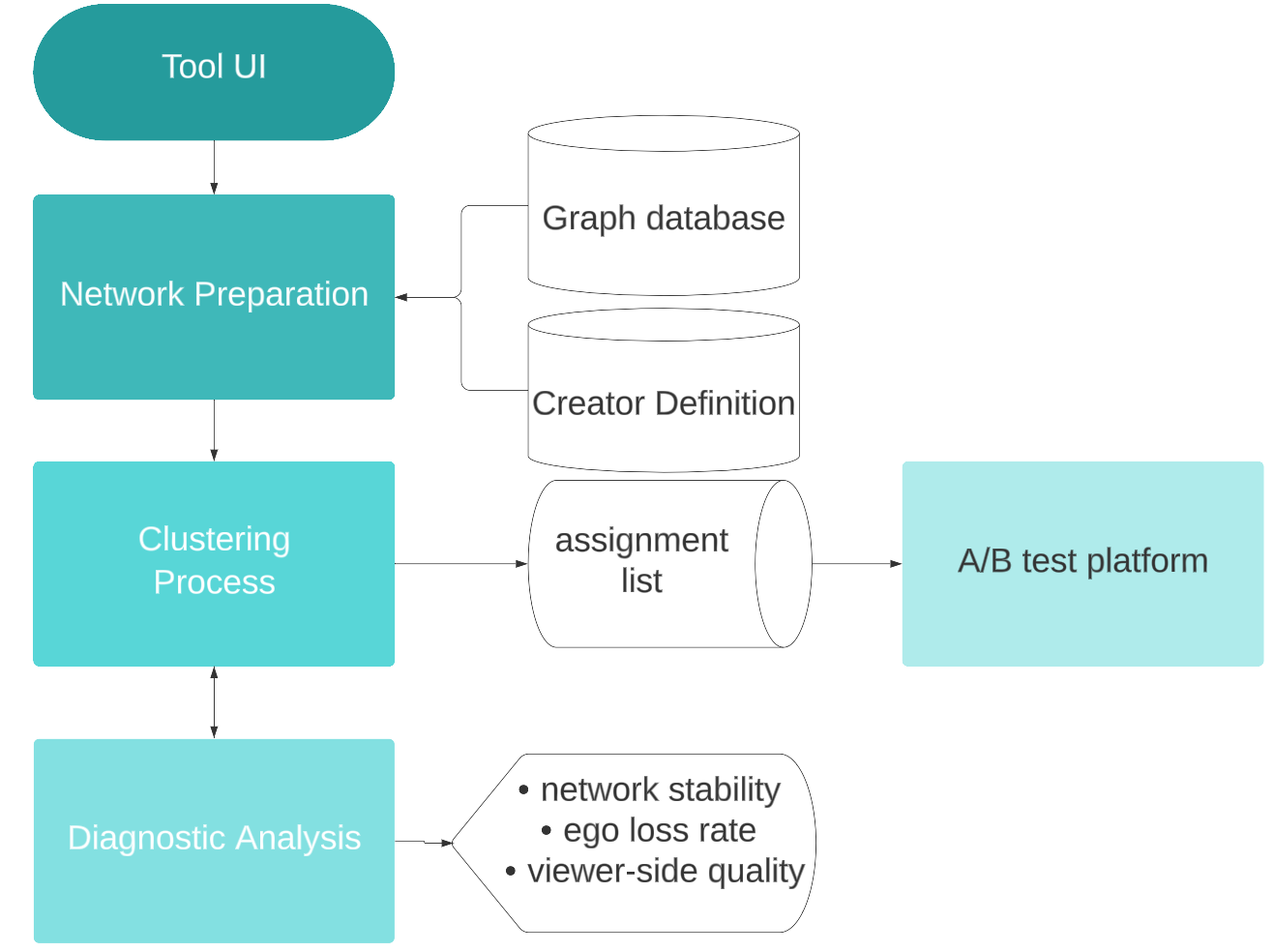}
  \caption{new tool architecture}
  \Description{this is the MP architecture design}  
  \label{fig:MP design}
\end{figure}

\subsection{Clustering Algorithm}
\label{sec:algorithm}

The most significant improvement in the latest version of Ego Cluster, v2, lies in its clustering algorithm. It utilizes the one-degree label propagation to generate ego-clusters, resulting in approximately 5 times more cluster samples while maintaining a similar cluster quality.

As the network effect stems from viewer feedback, we believe that how we assign viewers (alters) to specific creators (egos) is crucial for creating robust clusters. The algorithm's advantage over its predecessor lies in its simplicity and time efficiency. The algorithm utilizes the network structure to guide its progress and does not optimize any specifically chosen measure of community strengths.

The algorithm involves four key steps:
\begin{enumerate}
\item Randomly assign all egos to either the treatment or control variant group.

\item Obtain each alter's network with their network weight.

\item Assign the alter to the variant group based on the aggregated weighted network count, with ties broken uniformly randomly.

\item Further assign the alter to one ego in the same variant group based on its weight for clustering quality diagnostic.

\end{enumerate}

Following the new algorithm, Figure \ref{fig:cluster algorithm} illustrates an example of how we run the algorithm. This alter has viewed (or engaged based on other network type actions) 5 egos — 3 treatment egos and 2 control egos. The alter will be assigned to the control variant group because its total impression count is 11, exceeding the count of 6 for the treatment ego group. Subsequently, the alter will be further assigned to the top ego among this two-ego control group, as it has an impression count of 6, which is more than the count of 5 for the other control ego. This process establishes a one-to-one map from one alter to one ego, enabling us to group by egos and form the ego-cluster solution for all eligible population.

\begin{figure}[h]
  \centering
  \includegraphics[width=0.6\linewidth]{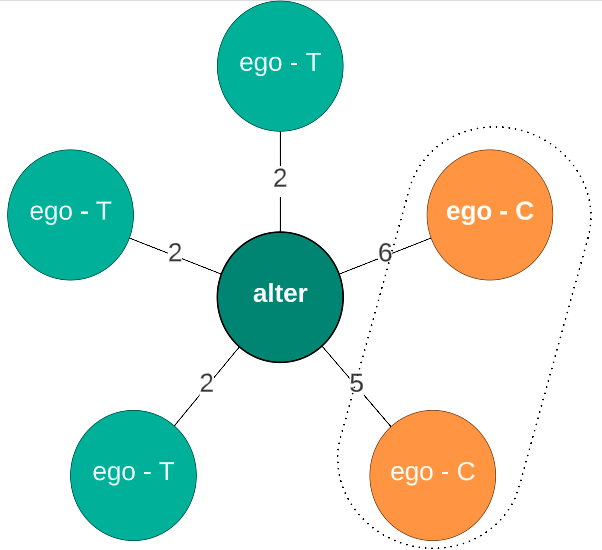}
  \caption{alter assignment based on network aggregation count}
  \Description{this is the MP clustering algorithm}  
  \label{fig:cluster algorithm}
\end{figure}

In Table~\ref{tab:comparison} below, we discuss the comparison between Ego Cluster V1 and V2. The new version of Ego Cluster generates about 5 times more sample size, depending on the network types, without sacrificing clustering quality. This enhancement significantly boosts the power of the experiment. The primary reason for this improvement is our ability to treat creators as the center of the cluster and resolve conflicts of creator-sharing-viewers by aggregating network action counts. Leveraging the existing labels of creators flexibly ensures that most of the clusters generated from the tool can be utilized in the A/B test.

Additionally, by focusing on building this one-to-one map from alters to egos, we can implement Spark parallelism computation instead of the sequential assignment used in V1, resulting in a 70\% reduction in the work flow running time. Moreover, by leveraging a consistent programming language with strong scalability, we are able to significantly reduce tool maintenance time.

While the new tool has seen significant improvement on multiple fronts, the current tool relies on a few key assumptions:
\begin{enumerate}

\item Low spillover: we assume that the measurement bias from the Ego Cluster experiment is controlled at a reasonable level during the experiment.
\begin{equation}
\mu_Y' = \frac{1}{N_T} \sum_{alter\ i \in T} Y_{ego}(1) - \frac{1}{N_C}\sum_{alter\ i \in C} Y_{ego}(0)
\end{equation}
where we treat alter i in the randomized experiment and measure the ATE for its connected ego's metric Y. Bias will always exist $\mu_Y - \mu_Y'$, the difference between real network effect and what we obtain from the experiment, because alters could still be connected to egos from different treatment group. By improving clustering algorithm, we hope to reduce this error to a reasonable level.

\item Network stability: since the Ego Cluster solution is built upon the historical network data and fixed during the A/B test period, we assume that our network structure is relatively stable so that network effect can be captured by the experiment setup. In other words, we assume the evolving network within a few weeks has relatively mild effect on the variance of $\mu_Y - \mu_Y'$.
\item Network predictability: the selected network type of Ego Cluster can predict the impact of the new AI feed model. We can fulfill this assumption by selecting the network type that best reflects the impact of the new AI model.
\end{enumerate}

\begin{table*}
\caption{Comparison between Ego Cluster V1 and V2}
\label{tab:comparison}
\centering

\begin{tabular}{p{5cm}p{5cm}p{5cm}}
\toprule
Improvement Area & Version 1 & Version 2 \\
\midrule

Code Maintenance & Apache Pig + Java with Map-Reduce & Spark \\
\midrule
Scalability & Sequentially fill up the sorted egos with poor scalability & Parallel clustering and easily scalable \\
\midrule
User Input Flexibility & Does not accept pre-defined labels or network types & Flexible network type and creator type defined from LinkedIn network graph \\
\midrule
Feed Traffic Occupation & 15-30\% & Similar \\
\midrule
Remaining Traffic Quality & Lower than average traffic quality in terms of connection and session count & Remains low, but more LinkedIn members improves testing power \\
\midrule
Sample Size & 200k & 1M \\
\midrule
Flow Running Time & 8 hours & 2 hours \\
\midrule
Loss Rate (Quality of Cluster) & 30\% & 20\% \\
\bottomrule
\end{tabular}
\end{table*}

\subsection{Diagnostic Analysis}
\subsubsection{cluster quality metrics} 

The outputs of the tool consist of a custom treatment assignment file required for the LinkedIn experimentation platform for member triggering and diagnostic files for users to select the best cluster solution based on multiple network choices.

Two key metrics are employed to measure the quality of the clustering: the loss rate metric and the stability rate metric. The loss rate measures the spillover degree between clusters across different variant groups. For each ego-cluster, it is calculated as the ratio of the total number of weighted alters that are also connected with other egos with different variant groups to the total number of alters in this ego network. Aggregating each ego's loss rate provides the overall loss rate of the clusters. Applying (\ref{eq:lossrate}), the loss rate in ego A is 3/8 assuming all alters have the same impressions upon egos in Figure \ref{fig:loss rate}, because only alter D, E and F can cause spillover on Ego A since they are connected to egos from different variant groups. See Supplement \ref{sec:proof}
for detailed proof on why this algorithm minimizes the spillover of Ego Cluster solution.

\begin{figure}[h]
  \centering
  \includegraphics[width=\linewidth]{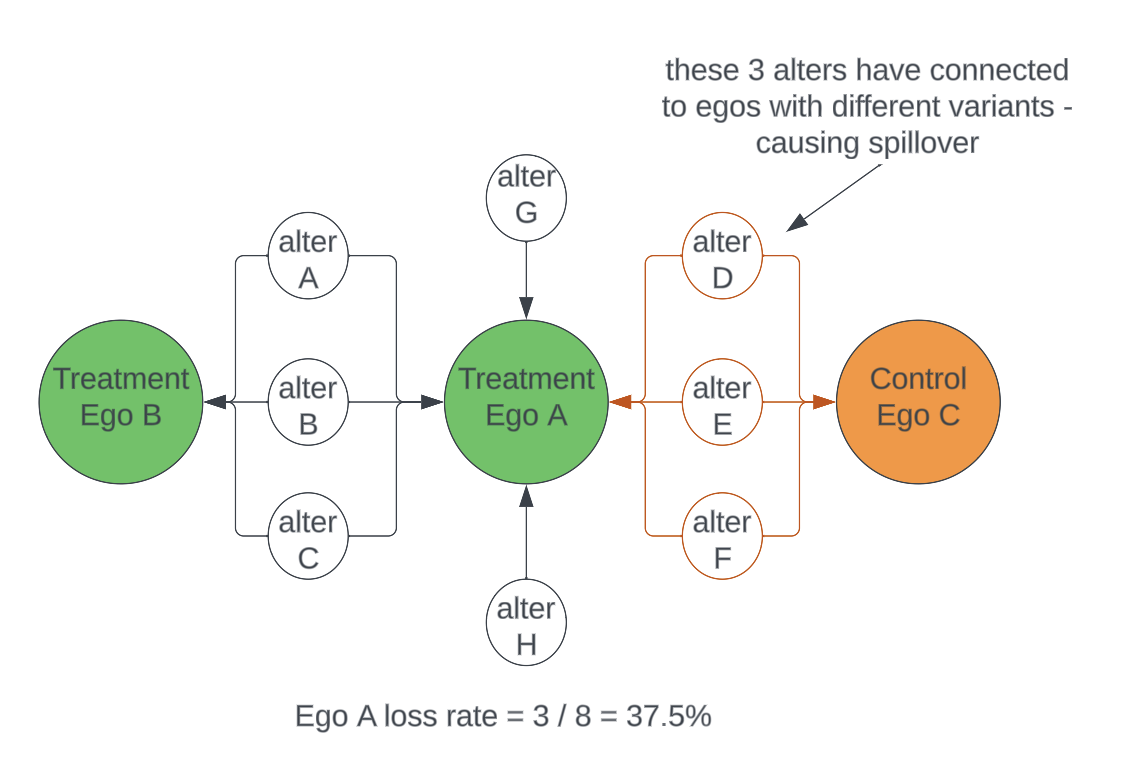}
  \caption{an example for loss rate calculation}
  \Description{this is an example for loss rate calculation}  
  \label{fig:loss rate}
\end{figure}

\begin{equation} \label{eq:lossrate}
loss\ rate =  \frac{total\ weighted\ misaligned\ alters}{total\ alters\ in\ each\ ego\ network}
 \end{equation}

% \begin{equation}
% loss\ rate =  \frac{total number of weighted alters assigned to egos with different variant group}{total number of alters in its ego network}
%  \end{equation}
 
% in RFC the full formula is shown too long for numerator
% loss rate = total number of weighted alters assigned to egos with different variant group / total number of alters in its ego network

The loss rate, as defined earlier, specifically measures the spillover degree \textit{at} the time of Ego Cluster creation, not \textit{during} the experiment when new members could continually join the network. To address this issue, we introduce another loss rate definition that covers the spillover degree in the next 14 days after the Ego Cluster is created.

Throughout the experiment, there are two types of spillover effects:
\begin{enumerate}
\item The existing alters from the Ego Cluster solution have connected to egos with a different treatment group.
\item There are new alters that are not in the Ego Cluster solution but connected to the egos.
\end{enumerate}

The first type of spillover effect is already captured in the formula mentioned earlier. For the second type, since the new alters can be assigned to any variant group during the experiment, assumptions need to be made about the type of variant groups that new alters will be assigned to during the test that generates the feedback impact.

\begin{table*}
\caption{Loss rate definition for newly joined alters}
  \label{tab:loss-rate-scenario}
  \begin{tabular}{p{5.8cm}p{9.5cm}}
    \toprule
    Scenario&Definition\\
    \midrule
    scenario 1 - most conservative & Assume all alters come from other variant groups that cause total spillover effect to the Ego Clusters\\     \midrule

    scenario 2 - most close to real experiment  & Assume Ego Cluster ramps 10\% of feed traffic, meaning 10\% of new alters will receive treatment or control variant of Ego Cluster experiment and the other 90\% will receive other feed ranking model variants that are most likely control variants \\     \midrule

    scenario 3 - optimistic estimate & Same assignment assumption as scenario 2, but assume that 90\% traffic assigned to viewer measurement experiments has an orthogonal to Ego Cluster experiment and no spillover effect at all. This is a very optimistic estimate. \\
  \bottomrule
\end{tabular}
\end{table*}

The worst-case scenario assumes that all newly joined alters come from different variant groups besides the existing treatment and control variants, and all alters' weights are considered as "loss," making the loss rate the most conservative estimate. In other words, all new alters are assumed to be actively polluting the experiments.

The second-scenario assumption is based on the fact that egoCluster traffic accounts for about 10\% of total active members in the past experiments so that we assume newly joined alters have about equal chance to be interacted with egos. In this case, we assume 10\% of these newly joined alters are assigned to the creator measurement experiments, and the rest of the alters are most likely to receive viewer measurement experiments, which is similar to control variant. Based on this assignment, the spillover rate is then calculated. This scenario is believed to be more proxy to the real situation.

A third-scenario assumption is similar to the second-scenario where we assume 10\% of new alters that show up in the Ego Cluster experiments are receiving Ego Cluster variants, either in the treatment or control group, and the rest 90\% are receiving viewer measurement experiment variants. The difference is that we assume this large proportion of alters is orthogonal to the Ego Cluster experiment and has no spillover impact at all. This would be the optimistic assumption.

The actual loss rate during the experiment should fall between the most conservative and optimistic scenarios, likely to be close to the second scenario.

\begin{align*}
& loss\ rate\ for\ 14\ days = \\
& \frac{existing\ misaligned\ alters + new\ misaligned\ alters}{total\ alters\ in\ each\ ego\ network}
\end{align*}

where there are different calculations of the spillover for “new misaligned alters” across clusters based on three scenarios mentioned above (See Table ~\ref{tab:bias-correction-result}).

% Loss Rate for the next 14 days =
%  (weights of existing alters connected to egos with different treatment variant group + new alters connected to egos total Ego Cluster weight) / total Ego Cluster alter weight
% where
% There are three calculations of the spillover for “new alters connected to egos” based on three scenarios.

We will obtain three values for the loss rate over the next 14 days, providing a range with upper and lower bounds for the spillover effect. As mentioned earlier, the actual spillover effect is likely to be closer to the second scenario, based on the calculation of traffic assignment. This metric is important to gauge whether our Ego Cluster solution meets the assumption of a "low spillover".

Another quality metric, stability rate, measures the degree of stability by which the ego network endures. While our cluster solution remains consistent over a 2-3 week experiment, the network undergoes dynamic evolution during the same period. It is crucial that we select a network type with a relatively stable clustering solution. This metric works as robustness check to make sure the network stability assumption is satisfied.

\begin{equation}
stability\ rate =  \frac{total\ T0\ edges\ that\ still\ exist\ in\ T14}{total\ T0\ edges\ in\ ego\ network}
 \end{equation}
%ysu1: I guess all day 1 should actually be day 0? I thin day 1 is more intuitive?

\subsubsection{network type summary table} 

Table ~\ref{tab:summary table} shows an example of output for each network type solution from Ego Cluster v2.

We focus on two main types of network types: impression and viral actions. With each network type, we have different aggregations of the network count based on the past 28, 45, and 90 days of member interactions. So, in the end, we have 6 final Ego Cluster solutions. In this example, using the past 90 days of viral action interactions as the network type offers the best overall clusters. It generates about 1.1M cluster samples with a 17.4\% loss rate at T0—the spillover rate when Ego Cluster is created. Additionally, we calculate what the spillover degree would be given there will be new interactions formed across clusters for the next 14 days in a real A/B experiment environment. The loss rate at T14 in scenario 2 is at 22\%, indicating that despite the network is evolving with more interactions between members, the overall spillover degree is manageable.

\begin{table*}
\caption{the main cluster result for each network type}
  \label{tab:summary table}
    \centering

  \begin{tabular}{p{3.5cm}p{2cm}p{2cm}p{2cm}p{2cm}p{2cm}}
    \toprule
     network type & cluster size & loss rate at T0 & loss rate at T14 - scenario 1 & loss rate at T14 - scenario 2 & loss rate at T14 - scenario 3\\
    \midrule
    
   past 28 days impressions & 1,457,513 & 35.3\% & 38.5\% & 32.8\% & 27.0\% \\      \midrule  
  
   past 45 days impressions& 1,569,492 &36.7\% & 36.1\% & 31.4\% & 26.7\% \\      \midrule  
   past 90 days impressions & 1,713,031 & 38.7\% & 33.9\% & 30.1\% & 26.3\% \\      \midrule  
   past 28 days viral actions & 576,519 & 11.1\% & 49.3\% & 29.6\% & 9.9\% \\      \midrule  
   past 45 days viral actions & 759,877 & 13.4\% & 41.2\% & 26.1\% & 10.9\% \\      \midrule

  past 90 days viral actions &1,081,143 & 17.4\% & 32.0\% & 22.0\% & 12.0\% \\ 
 
  \bottomrule
\end{tabular}
\end{table*}
%ysu1: just so that my understanding is acccurate, here the results are without loss rate capping? I don't exactly sure, it should be capped in v2.5. This table is from 2.5

% \begin{figure}[h]
%   \centering
%   \includegraphics[width=\linewidth]{}
%   \caption{the main cluster result for each network type}
%   \Description{this is the MP clustering algorithm}  
%   \label{fig:summary table}
% \end{figure}

\subsection{Viewer-side Experiment Measurement Correction}
\subsubsection{Ego Cluster leftover traffic quality} 
%ysu1: besides view-side traffic bias, I think it also worth discussing the ego's representativity problem, we can leave this for future. Both cluster size filter and loss rate cap introduce biases. I've done an analysis here https://docs.google.com/document/d/1Lgjv6NuIoZJfWJJi29PJR-2gsxBoArFHTZTx9bUWlQk/edit#heading=h.2hkd97t9lf2s

As our creator ecosystem is an interconnected network where we conduct both creator measurement experiments and viewer measurement experiments, the use of the Ego Cluster tool for creator measurement experiments inevitably absorbs some highly engaged traffic. This is because active individuals in the network are more likely to be included in our clustering solution. Consequently, this reduction in traffic for viewer measurement experiments, often conducted concurrently, results in what we term "Ego Cluster leftover traffic." This leftover traffic, composed of less engaged users, can cause measurement bias for viewer measurement experiments, as it does not accurately represent the overall user population.

Figure \ref{fig:userSessions} and \ref{fig:leftover traffic} compare the traffic quality between the Ego Cluster traffic and the leftover traffic in terms of network connection count and user sessions. There is a noticeable drop in quality for the remaining traffic as Ego Clusters absorb the most engaged members. Therefore, the issue that shows up in previous Ego Cluster solution persists in the new tool for feed-related experiments. For other AI models, the extent of the issue depends on the scope of the total experiment population and the quality and size of the Ego Cluster sample. We are proposing a more robust method for correcting metric measurements in other viewer measurement experiments on leftover traffic.

\subsubsection{model-free bias correction method} 
To correct the measurement bias on the viewer measurement experiment, we design the following setup illustrated in Figure \ref{fig:method setup}. Once we create the Ego Cluster traffic, including both egos and alters, we set aside a reserved population p$^{R}$ and put it into the viewer measurement experiment so that the eligible traffic for viewer measurement experiments now becomes p$^{R}$ + p$^{1}$.

% \begin{figure}[h]
%   \centering
%   \includegraphics[width=\linewidth]{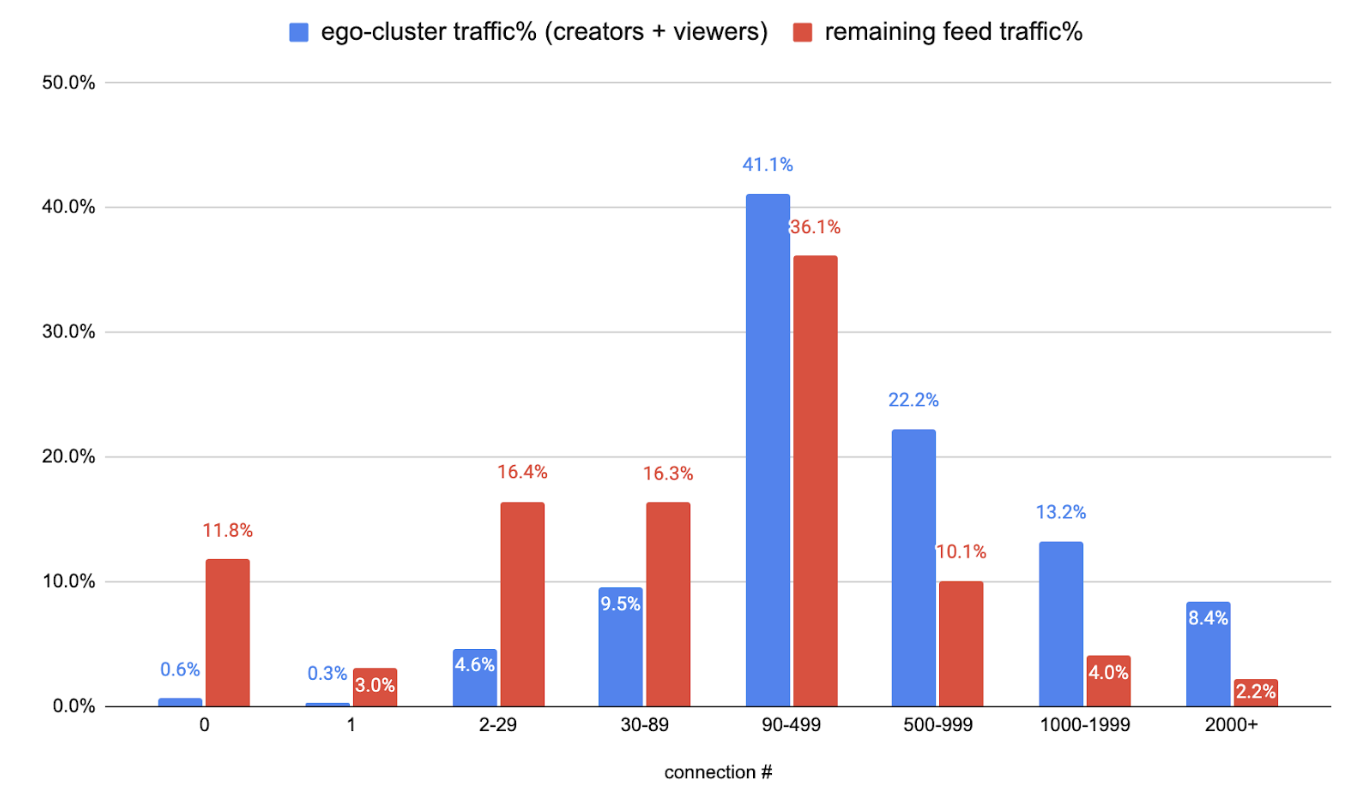}
%   \caption{user session \# difference between Ego Cluster and its leftover traffic}
%   \Description{this is the plot to show difference between two traffics}  
%   \label{fig:userSessions}
% \end{figure}

\begin{figure}[h]
  \centering
  \includegraphics[width=\linewidth]{}
  \caption{connection \# difference between Ego Cluster and its leftover traffic}
  \Description{this is the plot to show difference between two traffics}  
  \label{fig:leftover traffic}
\end{figure}

We define $n_R$ as the total reserved population, $n_E$ as the total Ego Cluster traffic, and $n_1$ as the total Ego Cluster leftover traffic, which now also includes the reserved part $n_R$. We have the following derived formula to calculate the bias-corrected metric measurement by including the "corrected" weights on metric performance.
% Suppose $\Delta(P^R)$ is the treatment vs. control lift on reserved traffic of p$^{R}$ and $\Delta(P^1)$ is the treatment vs. control lift on remaining traffic of p$^{1}$. We also define $r_{reserve} = n_R/ n_E $ where $n_R$ is total reserved population and $n_E$ is total Ego Cluster traffic so $r_{reserve}$ is the reserve weight - the proportion of Ego Cluster traffic that is assigned to viewer-side experiments. Similarly, we have $r_{leftover}$ referring to the proportion of the leftover traffic. We have the following derived formula to calculate the bias-corrected metric measurement.

% Figure \ref{fig:unbiased correction} 
\begin{figure}[h]
  \centering
  \includegraphics[width=\linewidth]{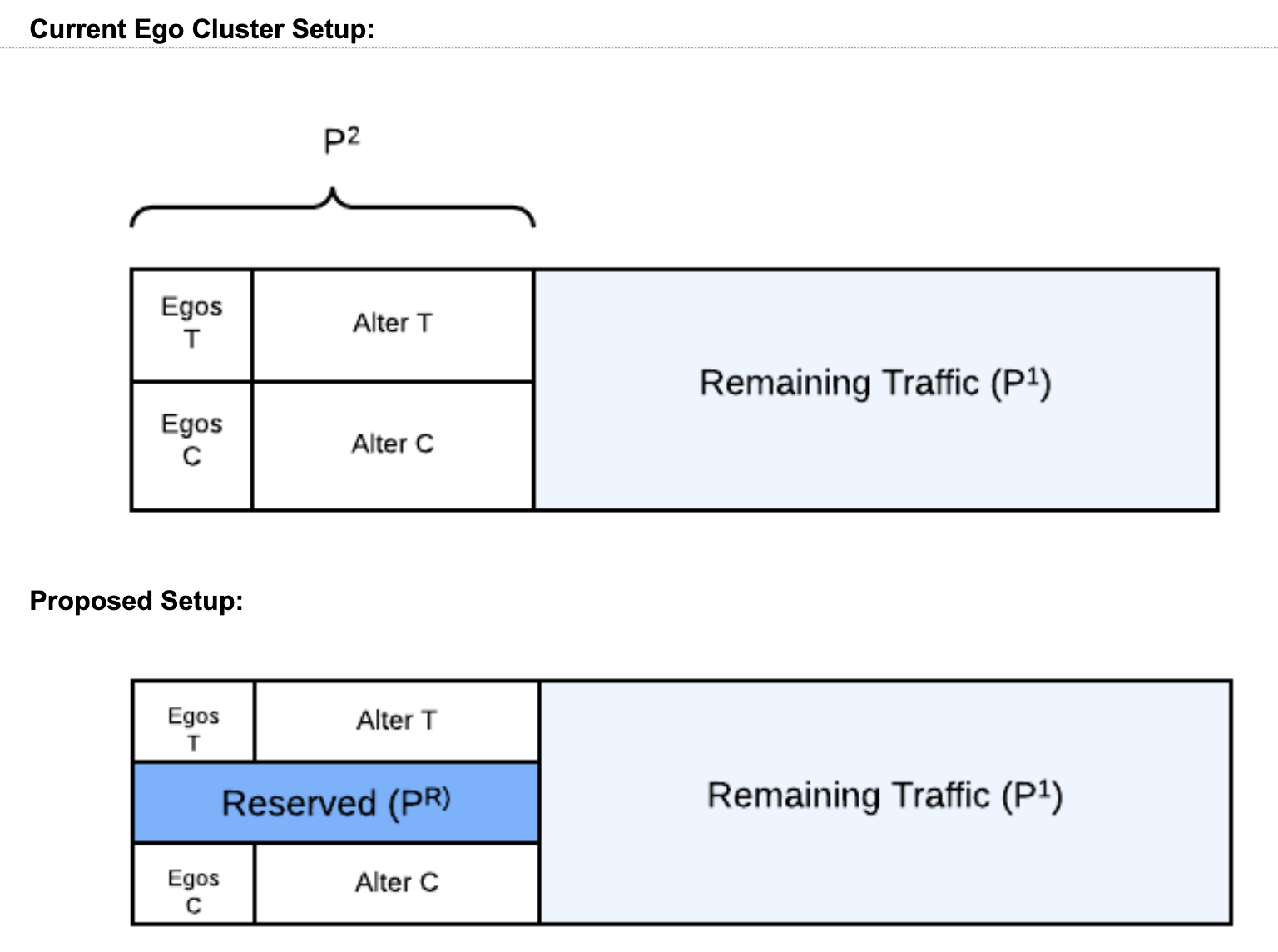}
  \caption{Ego Cluster leftover traffic bias correction method setup}
  \Description{bias correction setup}  
  \label{fig:method setup}
\end{figure}

\begin{equation}
    \begin{split}
        Mean_{T} = {Mean(reserve)}_{T} * \frac{n_E}{n_E + n_1}  + \\
        {Mean(leftover)}_{T} * \frac{n_1}{n_E + n_1} 
    \end{split}
\end{equation}

\begin{equation}
    \begin{split}
        Mean_{C} = {Mean(reserve)}_{C} * \frac{n_E}{n_E + n_1}  + \\
        {Mean(leftover)}_{C} * \frac{n_1}{n_E + n_1} 
    \end{split}
\end{equation}
%ysu1: mean(reserve) -> Mean(reserve) done

\begin{equation}
     \Delta\% = (Mean_{T} - Mean_{C}) / Mean_{C}
\end{equation}

\begin{equation}
    \begin{split}
        Var_{T} = {Var(reserve)}_{T} * (\frac{n_E}{n_E + n_1})^2  + \\
        {Var(leftover)}_{T} * (\frac{n_1}{n_E + n_1})^2
    \end{split}
\end{equation}

\begin{equation}
    \begin{split}
        Var_{C} = {Var(reserve)}_{C} * (\frac{n_E}{n_E + n_1})^2  + \\
        {Var(leftover)}_{C} * (\frac{n_1}{n_E + n_1})^2
    \end{split}
\end{equation}

\begin{equation}
     Var_{\Delta\%} = \frac{Var_{T}}{Mean_{C}^2 * n_{T}} +\frac{Mean_{T}^2 * Var_{C}}{Mean_{C}^4 * n_{C}}
\end{equation}

Through the metric lift change and the variance of the metric lift change, we can then calculate the p-value based on the classic two-sample t-test formula. To validate the effectiveness of the method, we conducted a back test using past feed AI model experiments. Since this experiment is already completed, and we know the true north minimum detectable effect (MDE) of key metrics, we ran the Ego Cluster tool to create Ego Cluster traffic and its leftover traffic among the LinkedIn feed population. We then applied the bias correction method to re-calculate the MDE as if we were to run this experiment on leftover plus reserve traffic only.

The result shows that the MDE of Daily Content Seeker increases from 0.13\% to 0.19\%, with the relative lift approximately the same, showing the unbiasedness of this method. The other metric, Unique Content Creator, shows similar results. This offline analysis demonstrates that while the MDE increases due to a smaller sample size, the estimate is unbiased as a result of the method correction.

% \begin{equation}
%  \begin{split}
% \ mean_{treatment} = {mean\ reserve}_{treatment} * \frac{n_E}{n_E + n_1} + \\
% {mean\ leftover}_{treatment} * \frac{n_1}{n_E + n_1} 
% \

% \end{split}
% \end{equation}

\begin{figure}[h]
  \centering
  \includegraphics[width=\linewidth]{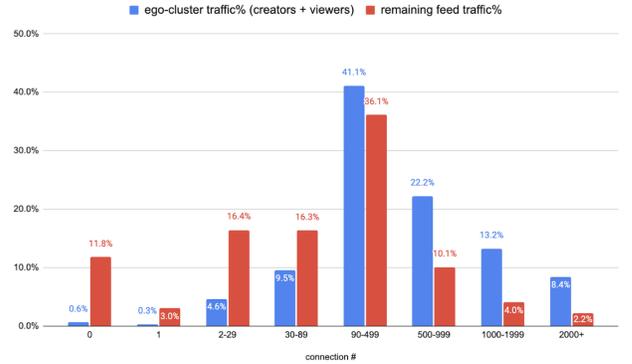}
  \caption{user session \# difference between Ego Cluster and its leftover traffic}
  \Description{this is the plot to show difference between two traffics}  
  \label{fig:userSessions}
\end{figure}

\begin{figure}[h]
  \centering
  \includegraphics[width=\linewidth]{leftover_traffic}
  \caption{connection \# difference between Ego Cluster and its leftover traffic}
  \Description{this is the plot to show difference between two traffics}  
  \label{fig:leftover traffic}
\end{figure}

% Figure \ref{fig:unbiased correction} 
% \begin{figure}[h]
%   \centering
%   \includegraphics[width=0.7\linewidth]{unbiased correction}
%   \caption{Ego Cluster leftover traffic bias correction method setup}
%   \Description{bias correction setup}  
%   \label{fig:method setup}
% \end{figure}

% Figure \ref{fig:MDE change} 

\begin{table*}
\caption{bias correction method result}
  \label{tab:bias-correction-result}
    \centering

  \begin{tabular}{p{4cm}p{4cm}p{4cm}p{4cm}}
    \toprule
     & \multicolumn{2}{c}{Unique Content Viewer}   & \\
    \midrule
   MDE & bias-corrected MDE & Relative lift & Bias-corrected lift \\      \midrule
   0.13\% & 0.19\% & 0.21\% & 0.22\% \\ \midrule
   \midrule
   &\multicolumn{2}{c}{Unique Content Creator}& \\
    \midrule
   MDE & bias-corrected MDE & Relative lift & Bias-corrected lift \\      \midrule
   0.57\% & 0.97\% & 0.56\% & 0.58\% \\ 
 
  \bottomrule
\end{tabular}
\end{table*}

\section{Application and Results}
% experiment result in Route 66https://ge.corp.linkedin.com/#/analyses/18057/result
After the tool was deployed to production, we utilized it to assess the impact of a new feed ranking model on the content creator's retention rate. The objective of the new AI model was to enhance the creator's retention rate by prioritizing their posts that would receive more feedback from their viewers. We selected the past 45 days' viral actions as the network type to construct the Ego Clusters, resulting in 1.6M cluster samples. Subsequently, we assigned half of the clusters' alters to the new AI model, while the other half used the old model. After running the experiment for 3 weeks, we compared the egos in the treatment group with their control group counterparts.

The results indicated that the creator love metric, a surrogacy metric measuring actions previously found to be correlated with future creator retention, showed a 5\% lift with statistical significance. Another noteworthy outcome was the increase of 0.7\% in the number of creator moders, a type of LinkedIn content creators, signifying that the positive network effect stimulates members to actively contribute to the social network. However, a few other metrics, such as unique content creator, remained neutral. The experiment demonstrates solid evidence that the new model encourages feedback from viewers, and this feedback motivates creators to continue producing content. Unfortunately, due to traffic competition from other experiments, we had to halt the experiment after 3 weeks. A more prolonged ramp-up or a larger sample size accumulation may reveal a stronger signal.

\section{Conclusion}

This paper introduces an enhanced Ego Cluster algorithm designed to measure creator-side metrics in the context of social networks. Proven as an effective method for network effect measurement in A/B experiments, the new version of Ego Cluster utilizes one-degree label propagation to significantly increase the sample size by more than 5 times, accompanied by a substantial reduction in manual model tuning. Moreover, it can adapt to different network types, providing diverse clustering options. The implementation of this algorithm greatly enhanced the efficiency of existing cluster-level experiments which assess the impact of new creator-related AI models at LinkedIn.

Several areas for future research and improvement are identified. The first area involves exploring ways to make clustering more robust with less spillover effect during A/B experiments. Emphasis can be placed on enhancing the longevity of Ego Cluster solution in this dynamically evolving network. The second area focuses on the Ego Cluster leftover traffic for other viewer measurement experiments. Studying methodologies to customize the algorithm for optimizing traffic between viewer measurement and creator measurement experiments could help minimize the impact of measurement bias when simultaneously conducting both types of experiments. The third area is the derivation of "site-wide" impact from any Ego Cluster experiment so that we not only know the impact of new feature from Ego Cluster traffic but how the company's bottom line going to change once this feature is rolled out universally.

%%
%% The acknowledgments section is defined using the "acks" environment
%% (and NOT an unnumbered section). This ensures the proper
%% identification of the section in the article metadata, and the
%% consistent spelling of the heading.
\begin{acks}
We thank our researchers and scientists in Data Science Applied Research and Flagship Data Science for their feedback.
\end{acks}

\appendix

%%
%% The next two lines define the bibliography style to be used, and
%% the bibliography file.
\bibliographystyle{ACM-Reference-Format}
\bibliography{egoCluster}

%%
%% If your work has an appendix, this is the place to put it.
% \appendix

% \section{Research Methods}

% \subsection{Part One}

% Lorem ipsum dolor sit amet, consectetur adipiscing elit. Morbi
% malesuada, quam in pulvinar varius, metus nunc fermentum urna, id
% sollicitudin purus odio sit amet enim. Aliquam ullamcorper eu ipsum
% vel mollis. Curabitur quis dictum nisl. Phasellus vel semper risus, et
% lacinia dolor. Integer ultricies commodo sem nec semper.

% \subsection{Part Two}

% Etiam commodo feugiat nisl pulvinar pellentesque. Etiam auctor sodales
% ligula, non varius nibh pulvinar semper. Suspendisse nec lectus non
% ipsum convallis congue hendrerit vitae sapien. Donec at laoreet
% eros. Vivamus non purus placerat, scelerisque diam eu, cursus
% ante. Etiam aliquam tortor auctor efficitur mattis.

% \section{Online Resources}

% Nam id fermentum dui. Suspendisse sagittis tortor a nulla mollis, in
% pulvinar ex pretium. Sed interdum orci quis metus euismod, et sagittis
% enim maximus. Vestibulum gravida massa ut felis suscipit
% congue. Quisque mattis elit a risus ultrices commodo venenatis eget
% dui. Etiam sagittis eleifend elementum.

% Nam interdum magna at lectus dignissim, ac dignissim lorem
% rhoncus. Maecenas eu arcu ac neque placerat aliquam. Nunc pulvinar
% massa et mattis lacinia.

\section{Proof}
\label{sec:proof}

\begin{lemma}
Given a directed social network graph G, the algorithm discussed in section \ref{sec:algorithm} will minimize the spillover effect for cluster creations in an A/B experiment.
\end{lemma}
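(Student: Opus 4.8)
The plan is to recast the lemma as a constrained optimization that decouples across alters. The ego labels produced in Step~1 of Section~\ref{sec:algorithm} are fixed by the experimental design --- they must be randomized to keep the cluster-level treatment-effect estimate unbiased --- so I would treat them as given and regard the only free variable as the map sending each alter to a variant group. I would then make precise the \emph{spillover effect} as the object being minimized: the total weighted edge mass joining an alter to egos whose variant label differs from the label the alter itself receives, which is exactly the source of experimental contamination summed over all ego-clusters and the numerator of the loss rate in~(\ref{eq:lossrate}). For alter $i$, let $w_i^T$ and $w_i^C$ denote the aggregated network weight from $i$ to treatment egos and to control egos, as in Figure~\ref{fig:cluster algorithm}. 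Assigning $i$ to treatment then contributes misaligned weight $w_i^C$, while assigning it to control contributes $w_i^T$.

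The crux is that the total spillover is an additive sum of these per-alter contributions with no constraint coupling distinct alters: changing the label of alter $i$ alters only the edges incident to $i$, since every ego label is held fixed, so the per-alter subproblems are genuinely independent (in contrast to size-balanced clustering, where a global capacity constraint would link them). Consequently, minimizing the total spillover reduces to minimizing each term separately, and for a single alter the minimum of the two admissible values is $\min(w_i^T, w_i^C)$, attained by assigning $i$ to whichever group carries the larger aggregated weight --- precisely the rule of Step~3, with the tie $w_i^T = w_i^C$ broken uniformly at random and either choice optimal. Summing over alters, the algorithm achieves $\sum_i \min(w_i^T, w_i^C)$, which is a pointwise lower bound for the spillover under any alter labeling; hence no labeling can do better and the algorithm is optimal.

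Two remaining points would close the argument. Step~4 attaches each alter to a single same-group ego, but an alter's cross-group misaligned weight depends only on its group label and not on which same-group ego it is ultimately attached to; Step~4 therefore leaves the total spillover unchanged and merely redistributes alters among same-group clusters for the per-ego diagnostic in Figure~\ref{fig:loss rate}. Moreover, because the lower bound holds for \emph{every} realization of the Step~1 labeling, it also holds after taking expectation over that randomization, so optimality is both conditional and on average. The main obstacle is not the greedy step itself --- which is elementary once the objective is fixed --- but pinning down the notion of spillover so that it truly decomposes over alters: I must argue that aggregating the per-ego misaligned masses yields a quantity whose minimizer coincides with the per-alter minimizer, and in particular that the graph- and label-determined denominators in~(\ref{eq:lossrate}) are constants with respect to the alter-assignment choice and so cannot reintroduce coupling between the subproblems.
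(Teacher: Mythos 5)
Your proposal is correct and follows essentially the same route as the paper's own proof: both fix the randomized ego labels, observe that the total misaligned edge mass decomposes additively over alters with no coupling between them, and conclude that the per-alter greedy choice is optimal (the paper phrases this as a single-alter exchange argument, you as a pointwise lower bound $\sum_i \min(w_i^T, w_i^C)$, which are equivalent here). The one subtlety you rightly flag --- that the defined loss rate is an average of per-ego \emph{ratios}, so the per-alter subproblem should in principle compare denominator-weighted sums rather than raw aggregates --- is glossed over in the paper's proof as well, which likewise treats the objective as the unnormalized total loss.
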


\begin{proof}
Assume the graph G = (V, E) where \(V\) is the ego as vertex and \(E\) is the network edge from alter to ego. Corresponding to Ego Cluster solution, each individual ego i is denoted as \(v_i\) and each individual alter j is denoted as \(a_j\). The network action from alter j to ego i, the edge value, is denoted as \(e_{j,i}\). \(e_{j,i}\) is zero when there is no network action occurred between alter j and ego i. The proposed algorithm aggregates the edge values from each alter to its connected egos based on ego's treatment or control variant group and assign the variant label to whichever variant that has higher edge aggregation value.

alter j variant is 
\begin{equation}
alter\ j\ variant = \{variant\ A: \sum_{i \in variant\ A} e_{j, i} > \sum_{i \in variant\ B} e_{j, i} \}
\end{equation}

loss rate of an ego from t is 
\begin{equation} \label{eq:lossrate2}
LossRate_{i \in t} = \frac{\sum_{j \in c} e_{j, i}}{\sum_{i \in c} e_{j, i} + \sum_{i \in t} e_{j, i}}
\end{equation}
where t refers to treatment variant and c refers to control variant. As shown in Figure \ref{fig:loss rate}, the loss rate for an ego is the ratio of the total edge value of misaligned alters assigned to different variant ego to the total edge value of the ego. And the final loss rate of the Ego Cluster solution is the simple average of loss rate of all egos. 
\begin{equation}
LossRate_{egoCluster} =  \frac{\sum_{i \in c, t} LossRate_i}{N_c + N_t}
\end{equation}

Without loss of generality, assume alter k, who would be assigned treatment variant group based on the proposed algorithm, is now assigned as control variant group. Since we know that the aggregated edge value towards treatment egos from the alter k is larger, we have
\begin{equation}
\sum_{i \in c} e_{k, i} < \sum_{i \in t} e_{k, i}
\end{equation}

From the loss rate calculation (\ref{eq:lossrate2}), the total loss that the alter k makes would be $\sum_{i \in c} e_{k, i}$ if it is assigned to treatment group, because that is the total negative impact that alter k will make towards the different variant egos. The loss will increase to $\sum_{i \in t} e_{k, i}$ if alter k is assigned to control group. With all other condition unchanged, the overall Ego Cluster loss rate will increase due to overall higher loss impact from the new alter assignment. 

The conclusion is that any other treatment assignment of alters from the current proposal would increase the total loss in the Ego Cluster system.

% todo add the contradiction proof

% To prove it by contradiction try and assume that the statement is false,
% proceed from there and at some point you will arrive to a contradiction.
\end{proof}
\end{document}